\useunder{\uline}{\ul}{}
\newcommand{\AJ}[1]{\textcolor{black}{#1}}
\newtheorem{theorem}{Theorem}
\newtheorem{lemma}{Lemma}
\newtheorem{Note}{Note}
\theoremstyle{definition}
\newtheorem{definition}{Definition}
\newcommand{\be}{\begin{equation}}
\newcommand{\ee}{\end{equation}}
\newcommand{\ben}{\begin{eqnarray}}
\newcommand{\een}{\end{eqnarray}}
\newcommand{\bes}{\begin{subequations}}
\newcommand{\ees}{\end{subequations}}
\newcommand{\bF}{\begin{figure}}
\newcommand{\eF}{\end{figure}}
\newcommand{\orcid}[1]{\href{https://orcid.org/#1}{\includegraphics[height = 2ex]{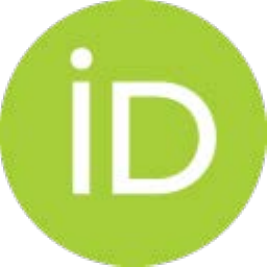}}}
\begin{document}

\title{Accreditation Against Limited Adversarial Noise}

\author{Andrew Jackson \orcid{0000-0002-5981-1604}}
\email{Andrew.J.Jackson@ed.ac.uk}
\affiliation{Department of Physics, University of Warwick, Coventry CV4 7AL, United Kingdom}
\affiliation{School of Informatics, University of Edinburgh, Edinburgh, EH8 9AB, United Kingdom}


\begin{abstract}
I present an accreditation protocol (a variety of quantum verification) where error is assumed to be adversarial (in contrast to the assumption error is implemented by identical CPTP maps used in previous accreditation protocols) -- albeit slightly modified to reflect physically motivated error assumptions. This is achieved by upgrading a pre-existing accreditation protocol (from [S. Ferracin \emph{et al}. Phys. Rev. A 104, 042603 (2021)]) to function correctly in the face of adversarial error, with no diminution in efficiency or suitability for near-term usage. 
\end{abstract}

\date{\today}
\maketitle

 \section{Introduction}
Without methods to assess the quality of quantum computers and computations, they will be untrustworthy and thus almost completely useless.
There are a range of methods for establishing trust in quantum computers and computations, each approaching the problem from a different perspective. Together, these are known as quantum characterization, verification, and validation~\cite{blumekohout2025quantumcharacterizationverificationvalidation} (often abbreviated QCVV). QCVV can be decomposed broadly into:
\begin{enumerate}
    \item Benchmarking~\cite{Proctor2024iph, Hashim2024dox}, which investigates and quantifies the performance of the quantum computer itself. Particularly popular is randomized benchmarking~\cite{PhysRevA.77.012307, Dankert_2009, PhysRevLett.123.060501, PRXQuantum.3.020357}.
    \item The characterization of quantum devices~\cite{blumekohout2025quantumcharacterizationverificationvalidation}, which seeks descriptions of the behavior of quantum devices and their capabilities.
    \item Verification~\cite{Gheorghiu_2018, Broadbent_2018, 8555111, kapourniotis2022unifying}, which investigates and quantifies how well \emph{specific} computations\footnote{E.g. preparing a six-qubit Greenberger-Horne-Zeilinger (GHZ) state~\cite{Greenberger1989}.} are performed.
\end{enumerate}
All of these approaches are vital to the future development and usage of quantum computers, particularly in the NISQ era~\cite{Preskill_2018} -- when quantum computations will be unreliable due to interactions with the environment~\cite{nielsen_chuang_2010} that induce erroneous operators in a computation -- but will remain so into the fault-tolerant era, to quantify the quality of the outputs and monitor the performance of the error correction.

\AJ{However, this paper -- from here on -- focuses exclusively on verification, particularly near-term applicable verification methods that quantify the effect of error rather than just detecting if error occurs. Among such verification protocols, two stand out: accreditation and mirror circuit fidelity estimation~\cite{Proctor2022cep}.}

\AJ{Typically, verification protocols run a quantum computation, allowing for the possibility of error, and -- via some machinations -- decide whether to accept or reject based on if error is detected \emph{at all}. However, detecting only \emph{if} error occurs is a much more limited measure of how well a specific computation is performed than either the ideal-actual variation distance (as in Def.~\ref{def:idealActualVarDist}), which accreditation protocols provide, or fidelity, which mirror circuit fidelity estimation provides.
As such, these protocols will be more useful in the near-term, when \emph{some} error will almost always occur (in large computations), and will allow for more nuanced decisions on how useful results are based on how severe the error is.} 

\begin{definition}
    \label{def:idealActualVarDist}
    For any circuit, $\mathcal{C}$, and an execution of it, $\Tilde{\mathcal{C}}$, the \underline{ideal-actual variation distance} of that execution (denoted as $\nu \big[ \Tilde{\mathcal{C}} \big]$) is the variation distance between the probability distribution the execution of the circuit would sample from if there were no error (i.e. the ideal case) and the probability distribution it ($\Tilde{\mathcal{C}}$) actually samples from.
\end{definition}
Furthermore, verification protocols, broadly speaking, rely on a different kind of assumption to accreditation \AJ{and mirror circuit fidelity estimation}. Verification has typically used the cryptographic framework/setting of delegated computation~\cite{Ferracin_2018} (i.e. is based on two characters Alice and Bob, each with their own \emph{opposing} aims) interacting and precluding error typically\footnote{There exist verification protocols that do not assume there is no error in some particular aspect of a computation, using two \emph{non-communicating} quantum computers~\cite{v012a003, PhysRevLett.120.040501, Natarajan_2017}, but herein I focus on single-device protocols.} either in state preparation~\cite{Barz2013, Kashefi_2017, Kapourniotis2019nonadaptivefault} or measurement~\cite{ Hangleiter_2017, PhysRevLett.120.040501, Markham_2020}); accreditation \AJ{and mirror circuit fidelity estimation} have instead tended to base \AJ{their} assumptions on the experimentally-observed physics within quantum computers.

By turning from the cryptographic and trust-based assumptions of verification protocols~\cite{Gheorghiu_2018} -- from which it originates -- to physics-based assumptions, accreditation has been able to develop assumptions and then protocols~\cite{Ferracin_2019} -- using those assumptions -- to initiate a vein of research that has thus far: provided the aforementioned scalable protocol to achieve confidence in computation outputs, proven itself experimentally implementable~\cite{Ferracin_2021}, and led to the first methods for quantifying the quality of the outputs of quantum analogue simulations~\cite{jackson2023accreditation, jackson2025improvedaccreditationanaloguequantum}.

But these successes have come at a cost. Most relevantly, for this paper, accreditation has hitherto required abandoning the adversarial noise model present in many verification protocols~\cite{Gheorghiu_2018}; moving, instead, to a model where noise induces identically and independently distributed (IID) CPTP  error\footnote{Meaning in each execution of a circuit, the error is represented by identical and independent -- across multiple executions -- completely positive trace preserving maps acting on both the system and its environment.}. Returning to an adversarial error model / problem setting has proven difficult as the physics-based assumptions of accreditation protocols clash with the adversarial model of noise.
This is resolved herein by upgrading the protocol in Ref.~\cite{Ferracin_2021} to assume a limited form of adversarial noise wherein the cryptographic setting / Alice and Bob formalism~\cite{Ferracin_2018}, as used in adversarial noise models, is used but modified with a protocol (Protocol~\ref{adversarialExecutionProtocol}) for how quantum circuits are executed that both Alice and Bob participate in -- still leaving Bob to be as malicious as he likes but limiting, based on the aforementioned physics-based assumptions, his knowledge (of the circuits to be executed) and abilities (to influence the execution of circuits) by adding a new, impartial, character, Robert, who actually perfroms all quantum computations.
This upgrade is summarized in Table~\ref{tab:protocolComparison} and improves upon the IID CPTP error model, in Ref.~\cite{Ferracin_2021}, while retaining validity in all situations where it was valid before. The result is a protocol with less stringent assumptions, loosening the requirements on the physical computers used to implement it.

\begin{table}[b]
\begin{tabular}{ll}
 {\ul Ref.~\cite{Ferracin_2021}}   & {\ul This paper}          \\
 Error modelled as CPTP maps & Error modelled as CPTP maps\\
 Error is  IID & Error is adversarial (but limited) \\
 GI single-qubit gate error        &  GI single-qubit gate error
\end{tabular}
\caption{Table showcasing the upgrade presented in this paper as a comparison between the assumptions in Ref.~\cite{Ferracin_2021} and herein. Note that GI is shorthand for Gate-Independent (see Sec.~\ref{sec:ErrorModel:justify}). 
}
    \label{tab:protocolComparison}
\end{table}

This is achieved through exploiting limits placed on Bob, and enforced by Robert, (based on experimental realities -- that the probability of error varies little between executions of the same circuit, on the same hardware, in the same environment -- and inspired by single-qubit gates experiencing gate-independent (GI) error in pre-existing accreditation protocols~\cite{Ferracin_2019, Ferracin_2021}) allowing for methods to overcome the factors that prohibited an upgrade to assuming adversarial noise in previously extant accreditation protocols.

This paper proceeds, from here, with an introduction to the adversarial problem setting used throughout this paper -- in Sec.~\ref{sec:ErrorModel:presentation} -- and its justification -- in Sec.~\ref{sec:ErrorModel:justify}. I then present my main result: the adaptation of accreditation protocols to the newly-developed problem setting in Sec.~\ref{sec:LargerAccreditationProtocol}. The paper then concludes -- in Sec.~\ref{discussionSec} -- with a discussion of further possible developments.

\section{Problem Setting of This Paper}
\label{sec:ErrorModel}
As this manuscript bridges the gap between the physics-based assumption model of previous accreditation papers and the cryptographic setting that is more typical of verification protocols, I take an approach that I hope will satisfy both tribes: below I present the problem setting (summarized in Table~\ref{table:AliceAndBob}) -- where Alice, Bob, Robert and their respective objectives are introduced -- the crux of which is a protocol (Protocol~\ref{adversarialExecutionProtocol}) for how circuits are executed (which Alice and Bob both participate in but neither actually implement the computation themselves; instead, Robert -- who is fanatically and exclusively devoted to correctly performing his role in Protocol~\ref{adversarialExecutionProtocol} -- does), and then the physics of errors occurring in a circuit execution are used to support and justify the problem setting, for those who prefer a physics-based error model.

The foundation of my problem setting\footnote{Which can be seen as akin to the error model in previous accreditation protocols~\cite{Ferracin_2019, Ferracin_2021, jackson2023accreditation, jackson2025improvedaccreditationanaloguequantum}.} is the assumption that error in any operation (e.g. state preparation, gate implementation, or measurement) may be considered as the ideal/errorless operation followed or preceded by a CPTP map, as in Ref.~\cite{Ferracin_2021} and depicted in Fig.~\ref{fig:CPTPDefiningErrorExample}. 

Modeling error as a CPTP map is justified as any map from and to density matrices is a CPTP map. Hence, any erroneous operation must be a CPTP map and so any erroneous implementation of an operation may be written as the errorless/ideal operation either followed or preceded by a CPTP map -- as unitaries are CPTP maps. 

Sec.~\ref{sec:ErrorModel} continues, from here, in Sec.~\ref{sec:ErrorModel:AliceAndBob} with an introduction to the two characters of the problem setting: Alice and Bob. It explains their aims and capabilities, but that does not completely characterize the problem setting as their exact mode of interaction to achieve these aims are not yet specified. This requires an intermission, in Sec.~\ref{sec:ErrorModel:definitions}, from the presentation of the problem setting; Sec.~\ref{sec:ErrorModel:definitions} defines the concepts of redaction and CPTP lists, which are vital for when the presentation of the problem setting is completed in Sec.~\ref{sec:ErrorModel:presentation}. Sec.~\ref{sec:ErrorModel:presentation} specifies exactly how Alice and Bob interact to achieve their respective -- and competing -- aims, with the help of a new character, Robert, who is impartial and only wants to facilitate the interactions of Alice and Bob.

\subsection{Introduction to Alice, Bob, and Robert}
\label{sec:ErrorModel:AliceAndBob}
With its foundations established, I now present my problem setting -- which is an adaption of the cryptographic setting~\cite{Ferracin_2019} -- beginning with the characters of the problem setting: Alice, Bob, and Robert. 

In the adapted cryptographic setting used herein, Alice is attempting to get the result of a specific sampBQP~\cite{Aaronson2014, Lund2017,  10.5555/3135595.3135617} quantum computation\footnote{Meaning a computation to obtain a sample from a specified distribution, that can be efficiently performed on a quantum computer. Note that sampBQP includes both BQP and fBQP (the set of functions efficiently computable on a quantum computer).} of her choosing, while only having the ability to perform polynomially-bounded classical computation and initiate -- and participate in -- Protocol~\ref{adversarialExecutionProtocol}.

Bob is computationally unbounded and -- using his participation in Protocol~\ref{adversarialExecutionProtocol} when Alice invokes it -- aims to trick Alice into accepting the results of an incorrect computation, believing they are the outputs of the computation she wanted to be performed, when they are not.

Bob represents the noise in the computation (the choices he makes in Protocol~\ref{adversarialExecutionProtocol} are choosing which error get applied to the computation Alice requests) and is the adversary that ``adversarial noise" gets its name from. This personification of the noise is used as a worst-case scenario, as real noise is not actually that smart or malicious, but if a protocol allows Alice to defend against Bob -- in the adversarial problem setting -- it will also work when computations experience the less sophisticated noise that is more typical in reality.

The limits of Bob \emph{are} contained within Protocol~\ref{adversarialExecutionProtocol} but these limits are enforced and personified within the modified adversarial problem setting by Robert, who also plays a role in Protocol~\ref{adversarialExecutionProtocol} and is the one who actually performs the computations -- that Alice is provided the results of -- in the problem settings. Robert also checks that Alice and Bob are conforming to all the rules of Protocol~\ref{adversarialExecutionProtocol}, and aborts the protocol if not.

\subsection{Problem Setting Preliminaries and Definitions}
\label{sec:ErrorModel:definitions}
While the aims of Alice and Bob remain exactly the same as in the standard cryptographic setting, my specific problem setting is slightly modified from the traditional cryptographic setting. Mainly through limitations on Bob and the associated addition of a new, impartial character called Robert. 

These limitations are entirely contained in how the circuits Alice requests be executed -- and Bob tries to corrupt -- are executed, which is presented in Protocol~\ref{adversarialExecutionProtocol} in Sec.~\ref{sec:ErrorModel:presentation}. Protocol~\ref{adversarialExecutionProtocol} features computations being performed not by Alice or Bob but by some third, honest, referee (whom I call Robert, as in Ref.~\cite{StilckFranca2022gameofquantum}); with inputs (as prescribed by Protocol~\ref{adversarialExecutionProtocol}) from Alice and Bob. Alice providing the computations and Bob contributing the error. The choices of these inputs, given to Robert, are how Alice and Bob each attempt to achieve their respective goals. I.e. Alice requests computations; to Robert, in Protocol~\ref{adversarialExecutionProtocol}; and Bob chooses the errors to add to the computations during their execution, which Robert dutifully applies (assuming they conform to the requirements of Protocol~\ref{adversarialExecutionProtocol}) to the computations Alice requested.

Presenting Protocol~\ref{adversarialExecutionProtocol} first requires I define a number of concepts. These start with Def.~\ref{def:redactedGates}, Def.~\ref{def:redactionclass}, and Def.~\ref{def:sameRedactionClass}.
Which, collectively, define a concept called redaction, and its usage, which is designed to model -- in the problem setting used herein -- the idea that noise is independent of which single-qubit gates are applied in a circuit but can depend on all other aspects of a circuit. As noise is adversarial in this paper, to model this I need some way of showing the adversarial noise (i.e. Bob) the circuit but hiding the single-qubit gates from it/him. Redaction is how I achieve this. It is not a physical thing we actually can do but is part of the new cryptographic model I develop herein -- and represents single-qubit gates experiencing gate-independent error (as the gates are hidden when the CPTP maps applying the error are chosen by Bob).

\subsubsection{Redaction and Related Definitions}

I now commence the series of definitions required to adequately define redaction.
\begin{definition}
\label{def:redactedGates}
    A gate in a circuit is said to be \underline{redacted} if the gate's position in the circuit (i.e. when it is applied and to which qubits) is specified but the operator it represents (e.g. if it is a Pauli $\mathcal{X}$ gate or Pauli $\mathcal{Z}$ gate) is not. For an example of a circuit with a redacted gate, see Fig.~\ref{fig:redactionExample}.
    \begin{figure}[b]
    \centering
    \includegraphics[width=0.32\textwidth]{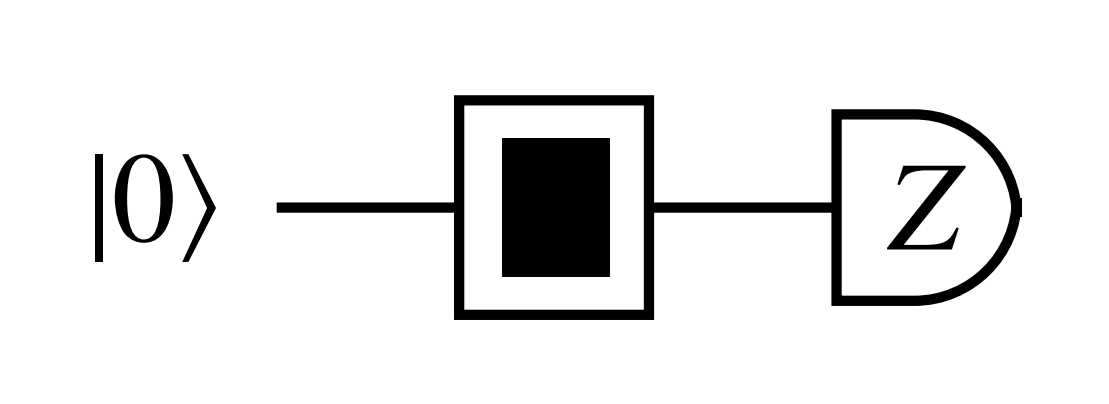}
    \caption{An example circuit with a redacted gate (the gate with a black block, in the middle of the circuit). Note that the location of the gate (in terms of when it is applied and which qubits are affected by it) is depicted but which operation the gate represents is hidden by the black block and hence is unknown to anyone viewing this redacted circuit.} \label{fig:redactionExample}
\end{figure}

Similarly, a circuit is redacted (i.e is a \underline{redacted circuit}) if all of its single-qubit gates are redacted, e.g. the circuit in Fig.~\ref{fig:redactionExample}.
\end{definition}
\begin{definition}
\label{def:redactionclass}
    Given a redacted circuit, the set of all circuits, without redactions, that the given redacted circuit possibly could be, if its redactions were removed, is called its \underline{redaction class}.
    For example, all the circuits in Fig.~\ref{fig:redactionClassExample} are in the redaction class of redacted circuit in Fig.~\ref{fig:redactionExample}.
\end{definition}

\begin{definition}
\label{def:sameRedactionClass}
Two circuits are \underline{in the same redaction class} if there exists a redacted circuit such that both the given circuits are in that redacted circuit's redaction class.

For example, all the circuits in Fig.~\ref{fig:redactionClassExample} are in the same redaction class: the redaction class of the redacted circuit in Fig.~\ref{fig:redactionExample}.
    \begin{figure}
    \centering
    \includegraphics[width=0.24\textwidth]{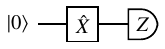}
    \includegraphics[width=0.25\textwidth]{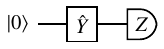}
    \includegraphics[width=0.25\textwidth]{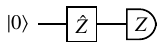}
    \caption{Example circuits within the same redaction class, where the redaction class corresponds to the redacted circuit in Fig.~\ref{fig:redactionExample}. Note that the circuit in Fig.~\ref{fig:redactionExample} could be any of the above circuits if the redaction on its single-qubit gate were removed.} \label{fig:redactionClassExample}
\end{figure}
\end{definition}

\subsubsection{CPTP Lists and Error-Related Definitions}
The second and final round of terminology required for the presentation of Protocol~\ref{adversarialExecutionProtocol} is related to how Bob specifies the CPTP maps that implement the error in a specific execution of a circuit.

When Bob wants to provide Robert with CPTP maps that implement the error in a specific execution, he gives Robert a CPTP list, as defined in Def.~\ref{def:CPTPLIst}. But Bob cannot just give Robert any CPTP list he likes, there are limits imposed by Protocol~\ref{adversarialExecutionProtocol} (and rigidly enforced by Robert) that the CPTP lists Bob provides must conform to:
\begin{enumerate}
    \item The CPTP list must match (as defined in Def.~\ref{def:fitting}) the circuit it will be applied in the execution of (basically meaning the CPTP map is capable of defining the error in the circuit execution).
    \item All CPTP lists Bob provided in a single instance of Protocol~\ref{adversarialExecutionProtocol} must be from a single \label{def:PSCL} Set of Probabilistically Similar CPTP Lists with parameter $\beta$ (defined in Def.~\ref{def:PSCL}), declared at the beginning of Protocol~\ref{adversarialExecutionProtocol}, for a $\beta \in \mathbb{R}$ known before the start of the protocol.  
\end{enumerate}
\begin{definition}
    \label{def:CPTPLIst}
    A \underline{CPTP list} is an ordered set of CPTP maps.
\end{definition}
However, a CPTP list is just a list of CPTP maps. In order to be able to describe / determine the error in a given circuit, the CPTP list must contain exactly the right number of CPTP maps and each must act on exactly the right number of qubits. In this case the CPTP list is said to \emph{fit} the circuit, as in Def.~\ref{def:fitting}.
\begin{definition}
    \label{def:fitting}
    A CPTP list \underline{fits} a circuit if it may be used to describe the error\footnote{Any resulting error is valid (from the perspective of fitting the circuit). It need not be in any way physically justified or correspond to a particular quantum computer's typical error.} in an execution of that circuit.

    A CPTP list achieves this by containing exactly one CPTP map corresponding to each location in the circuit where error may occur\footnote{These locations are: immediately after each gate, immediately after state preparation, and immediately before measurement.} and each CPTP map acts on the required number of qubits (determined by the location it corresponds to in the circuit).

    A CPTP list will \underline{fit} many different circuits and for any execution of a circuit there exists a CPTP list that both fits the circuit and accurately represents the error occurring in the circuit. I refer to this as the CPTP list \underline{determining the error} occurring in a circuit execution.
    \newline
    
    For an example of a CPTP list fitting a circuit and then determining the error in a circuit execution, see Fig.~\ref{fig:CPTPDefiningErrorExample}.
        \begin{figure}
    \centering
    \includegraphics[width=0.485\textwidth]{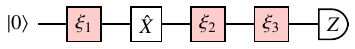}
    \caption{If $\Xi_1 = [\xi_1, \xi_2, \xi_3]$ is a CPTP list used to determine the error in an execution of the top circuit in Fig.~\ref{fig:redactionClassExample}, then Fig.~\ref{fig:CPTPDefiningErrorExample} depicts the circuit that is actually executed. $\xi_1$ describes the error due to state preparation, $\xi_2$ describes the error due to the single-qubit gate, and $\xi_3$ describes the error due to measurement.\\
    Note that notation is slightly abused to display the CPTP maps from $\Xi$ as gates (depicting error) in a circuit. To remedy this abuse slightly CPTP maps are highlighted in red to denote they are not unitaries but are CPTP maps. \label{fig:CPTPDefiningErrorExample}}
\end{figure}
\end{definition}
\begin{Note}
    \label{Note:redactionClassfits}
    For any set of circuits within the same redaction class, if a given CPTP list fits one of them, it fits all of them.
\end{Note}
    An important feature about CPTP lists and the circuits they fit is mentioned in Lemma~\ref{lem:errorDeteredByCPTPList}. Proof is omitted as Lemma~\ref{lem:errorDeteredByCPTPList} follows from the above discussion.
\begin{lemma}
    \label{lem:errorDeteredByCPTPList}
    The error in a circuit execution is entirely determined by the CPTP list used to describe its error\footnote{And -- potentially -- the outcomes of any stochastic processes in the application of the error it describes.}.
\end{lemma}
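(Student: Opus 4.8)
The plan is to recognize that Lemma~\ref{lem:errorDeteredByCPTPList} is essentially a definitional consequence, so the ``proof'' amounts to making explicit the chain of reasoning already assembled in Def.~\ref{def:fitting} and Def.~\ref{def:CPTPLIst}. First I would fix an arbitrary circuit execution $\Tilde{\mathcal{C}}$ together with a CPTP list $\Xi = [\xi_1, \xi_2, \dots, \xi_k]$ that fits the underlying circuit $\mathcal{C}$ and that is stipulated (as in the statement) to describe the error occurring in that execution. The goal is to argue that specifying $\Xi$ -- together, per the footnote, with the outcomes of any stochastic processes invoked in applying those maps -- leaves no freedom in what the executed (noisy) circuit is.

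The core step is to walk through the anatomy of the execution location by location. By Def.~\ref{def:fitting}, $\Xi$ contains exactly one CPTP map for each location where error may occur -- immediately after state preparation, immediately after each gate, and immediately before measurement -- and each $\xi_i$ acts on the correct number of qubits for its location. I would then observe that the errorless circuit $\mathcal{C}$ already fixes the ideal operation at every such location (the state preparation, each ideal gate unitary, and the measurement). Interleaving the ideal operations with the corresponding $\xi_i$, in the order prescribed by the list and the circuit layout, reconstructs the entire actually-executed channel; this is exactly the construction illustrated in Fig.~\ref{fig:CPTPDefiningErrorExample}. Since every ideal component is determined by $\mathcal{C}$ and every error component is an entry of $\Xi$, the composite noisy channel is completely determined, and hence so is the error (which is just the deviation of that channel from the ideal one).

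I would then address the caveat flagged in the footnote: if some $\xi_i$ is applied via a process with a stochastic outcome (for instance a randomly chosen Pauli from a mixture), the executed circuit on a given run also depends on those outcomes. The clean way to handle this is to note that conditioning on the realized outcomes collapses each such $\xi_i$ to a definite operation, after which the preceding interleaving argument applies verbatim; unconditionally, $\Xi$ determines the error as a distribution over executed circuits, which is all Def.~\ref{def:fitting} claims by saying $\Xi$ ``accurately represents the error.'' Because the statement explicitly omits the proof and the author already states it ``follows from the above discussion,'' I would keep this brief rather than formalizing the stochastic case in full measure-theoretic detail.

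The main obstacle here is not mathematical depth but precision of formulation: the only place the argument could slip is in being sure that the locations enumerated in Def.~\ref{def:fitting} genuinely exhaust every point at which the executed channel can differ from the ideal one, so that nothing outside $\Xi$ can influence the error. In particular I would want to confirm there is no implicit ``error-free'' stage being smuggled in, and that the single-qubit gate content hidden by redaction does not affect which locations exist (it does not, by Note~\ref{Note:redactionClassfits}, since redaction changes only the operator identity, not the location structure). Once that exhaustiveness is pinned down, the determination claim is immediate and the omission of a formal proof is fully justified.
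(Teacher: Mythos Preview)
Your proposal is correct and takes essentially the same approach as the paper: the author explicitly omits the proof, stating that the lemma ``follows from the above discussion,'' and your write-up is simply a careful unpacking of precisely that discussion (Def.~\ref{def:CPTPLIst}, Def.~\ref{def:fitting}, and Fig.~\ref{fig:CPTPDefiningErrorExample}). There is nothing to add or correct; if anything, your treatment of the stochastic-outcome caveat is more explicit than the paper's own passing footnote.
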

A key metric of error occurring in a circuit execution is defined in Def.~\ref{def:probOfError}, which allows for a useful quantification of the effect of the error.
\begin{definition}
\label{def:probOfError}
    For any execution of a circuit afflicted by only stochastic Pauli error, the \underline{probability of error} of that execution is the probability that \emph{any} Pauli gate is erroneously applied at \emph{any} point in the execution by an error channel. I.e. it is the probability that none of the stochastic Pauli error channels in the circuit execution apply a Pauli gate.
    
    For example, if, for all density matrices, $\rho$:
    \begin{align}
        \xi_1 (\rho) &= \dfrac{1}{2}\rho + \dfrac{1}{2} \hat{X} \rho \hat{X}^{\dagger}\\
        \xi_2 (\rho) &= \dfrac{1 - 0.01}{2}\rho + \dfrac{1 + 0.01}{2} \hat{X} \rho \hat{X}^{\dagger}\\
        \xi_3 (\rho) &= \dfrac{1 + 0.01}{2}\rho + \dfrac{1 - 0.01}{2} \hat{X} \rho \hat{X}^{\dagger},
    \end{align}
    then the \underline{probability of error} in the execution of a circuit where these channels model the error (e.g. Fig.~\ref{fig:CPTPDefiningErrorExample}) is the probability that any of $\xi_1$, $\xi_2$, or $\xi_3$ apply error to the circuit execution ($1 - \frac{1}{2} \frac{1 - 0.01}{2} \frac{1 + 0.01}{2} \approx 0.875$).
\end{definition}
Having defined the probability of error in an execution of a circuit, which is a consequence of the CPTP list determining the error in that execution, I can define a formal concept, in Def.~\ref{def:PSCL}, wherein a set of CPTP maps that may govern error in circuit executions all have a similar probability of error.
\begin{definition}
    \label{def:PSCL}
    A \underline{Set of Probabilistically Similar CPTP Lists} with parameter $\beta \in \mathbb{R}$ (abbreviated as a \underline{SPSCL$_\beta$}) is a set of CPTP lists, that all fit the same set of circuits (as in Def.~\ref{def:fitting}), such that, for any CPTP list in the set, the probability of error in any circuit execution it determines the error for, once it is twirled to stochastic Pauli error, is within the interval:
    \begin{align}
        \label{Eqn:ErrorModelsConditions}
        \big[ P_0 \big( 1 - \beta\big), P_0 \big( 1 + \beta \big) \big],
    \end{align}
    for some $P_0 \in [0,1]$ (such that $P_0 \big( 1 + \beta \big) \leq 4/ 5$).
    \newline

    For an example of a SPSCL$_\beta$, consider two CPTP lists: $\Xi_2 = [\xi_1]$ and $\Xi_3 = [\xi_2]$ (where $\xi_1$ and $\xi_2$ are as in Def.~\ref{def:probOfError}).
    Then the set $\{ \Xi_2, \Xi_3\}$ can be considered as an SPSCL$_\beta$ for any $\beta > 0.01$ (where the value of $P_0$ in this SPSCL$_\beta$ is $0.5$).
\end{definition}

\subsubsection{Relating CPTP Lists and Redaction Classes}

With both SPSCL$_\beta$ and redaction classes defined, they may be related to each other via the below Lemma~\ref{lem:definingMatching}. As will be seen, for each use of Protocol~\ref{adversarialExecutionProtocol}, Alice chooses a redaction class (implicitly, as all circuits must come from the same redaction class), and Bob chooses an SPSCL$_\beta$ to determine the error in those executions. Hence, SPSCL$_\beta$ and redaction classes are in some sense duals of each other. This duality is further expounded apon by Lemma~\ref{lem:definingMatching}.
\begin{lemma}
    \label{lem:definingMatching}
    For any given redaction class and any given SPSCL$_\beta$ $\bold{either}$:\\
    $\bold{1.}$ Every CPTP list in the SPSCL$_\beta$ fits every circuit in the redaction class\\
    $\bold{or}$\\
    $\bold{2.}$ No CPTP list in the SPSCL$_\beta$ fits any circuit in the redaction class.\\ \( \\ \)
    I refer to the first of the above options as the redaction class and the SPSCL$_\beta$ \underline{matching} (but also allow the term to apply if the redaction class is replaced by any subset of itself).  
\end{lemma}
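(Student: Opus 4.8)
The plan is to prove the dichotomy by showing that the only alternative to option \textbf{2} is option \textbf{1}; that is, a single witnessing instance of fitting forces the fully universal statement. Since the lemma is a disjunction, it suffices to assume that option \textbf{2} fails and then derive option \textbf{1}. Concretely, I would suppose there exists at least one CPTP list $\Xi^*$ in the SPSCL$_\beta$ and at least one circuit $\mathcal{C}^*$ in the redaction class such that $\Xi^*$ fits $\mathcal{C}^*$, and argue from this lone pair $(\Xi^*, \mathcal{C}^*)$.

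The first key step is to propagate the fit across the redaction class. Because $\mathcal{C}^*$ lies in the given redaction class and $\Xi^*$ fits it, Note~\ref{Note:redactionClassfits} immediately yields that $\Xi^*$ fits \emph{every} circuit in the redaction class. The justification here is that redaction hides only the operators of single-qubit gates while leaving unchanged every error location and the number of qubits each location acts on; since fitting (Def.~\ref{def:fitting}) depends only on these structural features, it is shared by all circuits in a common redaction class.

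The second key step is to propagate the fit across the SPSCL$_\beta$. By Def.~\ref{def:PSCL}, every CPTP list in a SPSCL$_\beta$ fits exactly the same set of circuits. As $\Xi^*$ belongs to the SPSCL$_\beta$ and, by the previous step, fits all circuits in the redaction class, any other CPTP list $\Xi'$ in the SPSCL$_\beta$ fits precisely the circuits that $\Xi^*$ does, and therefore also fits every circuit in the redaction class. Combining the two steps gives that every CPTP list in the SPSCL$_\beta$ fits every circuit in the redaction class, which is option \textbf{1}. The parenthetical extension to subsets of the redaction class is then automatic, since fitting every circuit in the class entails fitting every circuit in any subset of it.

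I do not expect a genuine obstacle, as the result is essentially a logical consequence of two already-established facts: Note~\ref{Note:redactionClassfits} and the ``fit the same set of circuits'' clause of Def.~\ref{def:PSCL}. The only point requiring care is the quantifier bookkeeping, namely ensuring that the single witnessing pair $(\Xi^*, \mathcal{C}^*)$ truly suffices to force the two nested universal quantifiers in option \textbf{1} -- which is exactly what the two propagation steps, applied in sequence, guarantee.
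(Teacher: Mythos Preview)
Your proposal is correct and takes essentially the same approach as the paper: both arguments combine Note~\ref{Note:redactionClassfits} (fitting propagates across a redaction class) with the ``fit the same set of circuits'' clause of Def.~\ref{def:PSCL} (fitting propagates across an SPSCL$_\beta$) to obtain the dichotomy. Your version is simply more explicit about the logical framing---assuming the negation of option~\textbf{2} and deriving option~\textbf{1}---but the substance is identical.
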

\begin{proof}
    As in Note~\ref{Note:redactionClassfits}, if a given CPTP list fits one circuit in a redaction class, it fits every circuit in that redaction class.
    As in Def.~\ref{def:PSCL}, if one CPTP list in a given SPSCL$_\beta$ fits a specific circuit then every CPTP list in that SPSCL$_\beta$ fits that circuit.
    Lemma~\ref{lem:definingMatching} follows from combining these two facts.
\end{proof}

\subsection{Problem Setting Presentation}
\label{sec:ErrorModel:presentation}
I can now present the protocol by which sets of circuits, in my problem setting, are executed, in Protocol~\ref{adversarialExecutionProtocol}. Both Alice and Bob are required to participate in Protocol~\ref{adversarialExecutionProtocol}, when initiated, and both must abide by all requirements of it at a. These requirements are enforced by another character, Robert, who actually implements the quantum computations -- in Protocol~\ref{adversarialExecutionProtocol} -- according to Alice and Bob's combined instructions. However, Protocol~\ref{adversarialExecutionProtocol} still leaves room for Bob to attempt to trick Alice.

\begin{figure}
    \centering
\begin{algorithm}[H]

\SetAlgorithmName{Protocol}{protocol}{List of Protocols}

\begin{enumerate}
    \item Alice provides a set of circuits to execute, $\mathcal{S}$, all within the same redaction class, to Robert.
    \item Bob receives the full details of $\mathcal{S}$, from Robert, but with every circuit in $\mathcal{S}$ redacted.
    \item Bob chooses an SPSCL$_{\beta}$, denoted $\Xi$, that matches $\mathcal{S}$ and tells Robert it, where $\beta \in \mathbb{R}^+$ is known to both Alice and Bob in advance.
    \item For each circuit, $\mathcal{C}$, in $\mathcal{S}$:
    \begin{enumerate}
        \item Bob chooses a CPTP list, $\xi$, from $\Xi$ and gives it to Robert.
        \item $\mathcal{C}$ is executed, by Robert, with $\xi$ determining the error occurring during the execution.
        \item Robert gives both Alice and Bob the measurement outcomes of the execution of $\mathcal{C}$.
    \end{enumerate}
\end{enumerate}

\caption{Formal Adversarial Model: How Sets of Circuits are Executed
 \label{adversarialExecutionProtocol}}
\end{algorithm}
\end{figure}
Alice is free to initiate Protocol~\ref{adversarialExecutionProtocol} whenever and as often as she likes; Bob is free to act how he likes to achieve his stated malicious aims but must execute Protocol~\ref{adversarialExecutionProtocol} as prescribed -- but is free to act nefariously within the permitted bounds -- when it is invoked. Robert checks Bob's nefarious choices are within the rules of Protocol~\ref{adversarialExecutionProtocol} and executes circuits. 

The aims of Alice and Bob remain the same as in the standard cryptographic setting and as discussed in Sec.~\ref{sec:ErrorModel:AliceAndBob}: Alice is attempting to get Robert (via Protocol~\ref{adversarialExecutionProtocol}) to perform a quantum computation and provide her with the true results, while Bob is trying to trick Alice by adding error to Alice's requested circuit (when Robert executes it in Protocol~\ref{adversarialExecutionProtocol}) so that Robert provides Alice with the results of an incorrect computation and Alice believes that the computation was performed without error. I.e. that the results Alice receives are the outputs of the computation she wanted to be performed when they are not (due to the error Bob adds). However, while trying to achieve these aims, Alice and Bob can only communicate via Protocol~\ref{adversarialExecutionProtocol} -- with Robert.
Robert's aims are entirely neutral, he has no preference on if Alice or Bob successfully achieves their aim. He just wants to do his duty and perform Protocol~\ref{adversarialExecutionProtocol} correctly.  
The problem setting of this paper is summarized in Table~\ref{table:AliceAndBob}.
\begin{table*}[]
\begin{tabular}{l|l|l|l}
                           & \textbf{Alice}                                                              & \textbf{Bob}
                           & 
                           \textbf{Robert}
                           \\ \hline
Aim                        & Obtain the results of a specific computation        & Alice to accept incorrect results & Implement Protocol~\ref{adversarialExecutionProtocol} as requested \\
Computational Capabilities & Polynomial time classical computation                              & Unbounded  & sampBQP computations                                                                \\
Additional Capabilities    & May initiate Protocol~\ref{adversarialExecutionProtocol} at any time & None & None                                                                      \\
Allowed Communication       & None, aside from via Protocol~\ref{adversarialExecutionProtocol}   & None, aside from via Protocol~\ref{adversarialExecutionProtocol} 
& None, aside from as in Protocol~\ref{adversarialExecutionProtocol} 
\end{tabular}
\caption{\label{table:AliceAndBob} Table summarizing the problem setting of this paper (i.e. the adapted cryptographic setting) in terms of the aims, computational abilities, additional capabilities, and allowed communication of its only characters: Alice, Bob, and Robert.}
\end{table*}

While the bulk of the rest of this paper focuses on constructing the accreditation protocol, without much regard for Alice, it culminates in Theorem~\ref{thm:AccProtocolIsCorrect} showing that Alice's problem is solved by the protocol presented herein. I.e. in the problem setting from this section, the protocol presented in this paper allows Alice to get Bob to perform a quantum computation for her and have confidence in the results she receives.
\subsection{Physical Justification of the Problem Setting}
\label{sec:ErrorModel:justify}
The two main limitations on the adversarial error / Bob, both following from Protocol~\ref{adversarialExecutionProtocol}, requiring justification are:
\begin{enumerate}
    \item The single-qubit gates are redacted when the circuits in $\mathcal{S}$ are shown to Bob (in step 2 of Protocol~\ref{adversarialExecutionProtocol}).
    \item For all circuit executions within a single use of Protocol~\ref{adversarialExecutionProtocol}, the different CPTP lists Bob uses to determine the error in each execution are all within a single SPSCL$_{\beta}$, for a known $\beta \in \mathbb{R}^+$.
\end{enumerate}
Both of these limitations are applied (in the case of the first) or enforced (in the case of the second), in Protocol~\ref{adversarialExecutionProtocol}, by Robert.

The first limitation corresponds to single-qubit gates experiencing gate-independent error. This is a standard assumption the the pre-existing accreditation protocols (in Refs.~\cite{Ferracin_2019, Ferracin_2021, jackson2023accreditation}) and follows from single-qubit gates typically being the least error-prone components of a quantum computer~\cite{Arute2019, Wright2019} (which has held true over time~\cite{doi:10.1126/science.1145699, PhysRevLett.113.220501}, and can be most clearly seen in Ref~\cite[Fig.~5]{9355264}): the error is so small that the error in different single-qubit gates cannot differ much.
This is a ``standard [assumption] in the literature on noise
characterisation and mitigation"~\cite{ferracin2022efficiently} and has seen extensive use in theoretical work~\cite{PhysRevA.77.012307, PhysRevLett.106.180504, PhysRevLett.106.230501, PhysRevLett.109.070504,PhysRevA.87.062119, PhysRevLett.114.140505, PhysRevA.92.060302, Erhard2019, Harper2020, Dahlhauser_2021, ferracin2022efficiently}.

The second limitation is a weakening of the IID assumption in previous accreditation protocols.
The intuition behind it is that the same hardware executing very similar circuits in quick succession -- as happens in Protocol~\ref{adversarialExecutionProtocol} -- will experience similar error in each circuit execution as:
\begin{enumerate}
\item The hardware executing the similar circuits is the same in each execution and any error-inducing aspects of the hardware, or surrounding environment, are unlikely to change much in the short time between executions.
\item The effect of any aspect of the circuits being executed that may change the error is minimized, as the circuits are very similar.
\end{enumerate}
Although no paper has, to my knowledge, sought to directly validate this limitation, it can be justified experimentally:
\begin{enumerate}
\item Ref.~\cite[Figure.~5]{Ferracin_2021} ran the accreditation protocol developed therein many times, producing a probability of error in the trap circuit executions for each use of the protocol. The set of error probabilities generated by this can be seen to vary little across many uses of the protocol.
\item Ref.~\cite[Fig.~6, Fig.~7, and Fig.~8]{tannu2018case} investigated the error rates in single-qubit and two-qubit gates. It found these error rates are very rarely far from their average. Note that the variation shown in this paper is over a much longer time-span (days) than this limitation requires (seconds).
\item Ref.~\cite[Fig.~9]{9355264} examined NISQ computers and plotted the error rates of different qubits in each device. Fig.~9 shows the error bars on the error rate to be -- with some notable exceptions -- small fractions of the error rate and varying much more across qubits than for a fixed qubit over time.
\item Ref.~\cite[Fig.~1(b)]{10.1145/3297858.3304075} looked at the error rates of two-qubit gates on differing pairs of qubits, it shows that (again, over a time-span of days) for a specific pair of qubits, the days when the error rate deviates far from its average are rare, with more extreme deviations being rarer.
\end{enumerate}
Finally, this limitation can be seen as an aim of quantum computer hardware engineering: as quantum computers improve and their actual outputs approach the errorless outputs, with decreasing variance, the typical difference -- by any measure -- between the error contained in two executions of similar circuits will tend to zero.
 
To my knowledge, I am the first to explicitly state this second limitation. However, I note that it is already implicitly accepted in the community by the acceptance of randomized benchmarking~\cite{PhysRevA.77.012307, Dankert_2009, PhysRevLett.123.060501, PRXQuantum.3.020357} as a meaningful measure: if even the same circuit repeated multiple times, in very quick succession, produces wildly varying probabilities of error then the variation of error probabilities implies that previous measures mean almost nothing for future computations or the quality of a quantum device over a meaningful time-span. This is not the case and Ref.~\cite[Fig.~6]{10.1145/3297858.3304075} shows that, over a span of days, the results of randomized benchmarking do not vary over a very large range.

\section{Adversarial Accreditation Protocol}
\label{sec:LargerAccreditationProtocol}
Sec.~\ref{sec:LargerAccreditationProtocol} is dedicated to resolving the problem Alice faces in the problem setting established in Sec.~\ref{sec:ErrorModel:presentation}. This is equivalent to upgrading the accreditation protocol in Ref.~\cite{Ferracin_2021} that assumed error is CPTP and IID (and that single-qubit gates suffer only gate-independent error) to one that works in the problem setting described in Sec.~\ref{sec:ErrorModel:presentation}.

Sec.~\ref{sec:LargerAccreditationProtocol} begins -- in Sec.~\ref{sec:TrapsAndTargets} -- with a presentation of the trap and target circuits I intend to use in the new accreditation protocol, which are very similar to those in Ref.~\cite{Ferracin_2021}. As a trap-based verification protocol, the accreditation protocol presented in Sec.~\ref{sec:LargerAccreditationProtocol} needs these trap circuit executions to be executed alongside the target circuit (in the same single use of Protocol~\ref{adversarialExecutionProtocol} and hence experiencing comparable error by the assumptions implicit in the problem setting defined in Sec.~\ref{sec:ErrorModel}) and give a measure of the quality of the execution of the target circuit.  
The usage of these trap and target circuits to produce an accreditation protocol is them detailed in Sec.~\ref{sec:DigitalAccProtocol} (and more formally presented in Algorithm~\ref{StandardAccAlg}).

\subsection{Trap and Target Circuits}
\label{sec:TrapsAndTargets}
In this paper, I do not propose to develop new trap circuits or target circuits. In fact, I would prefer to make minimal changes to the trap and target circuits in Ref.~\cite{Ferracin_2021}. I will also not regurgitate the exact designs of the trap and target circuits in Ref.~\cite{Ferracin_2021} and instead throughout this paper will assume that I have two efficient classical algorithms, $P_{\textit{targ}}$ and $P_{\textit{trap}}$, that, if given any quantum circuit as input, return a random\footnote{I.e. $P_{\textit{trap}}$ and $P_{\textit{targ}}$ will choose random gates used to apply the Pauli twirls and probabilistic error detection (via Hadamard gates), so will return a slightly different circuit each time.} corresponding target and matching trap, respectively, of the protocol in Ref.~\cite{Ferracin_2021}. I briefly note the important features of these trap and target circuits (that will be inherited wherever I use trap or target circuits herein):
\begin{enumerate}
    \item In target circuits and trap circuits all error occurring is twirled (via Pauli twirls) to stochastic Pauli error and is thereafter considered as such. 
    \item If no error occurs in a trap, it gives a specific output, $m$, and if error does occur the trap does \emph{not} give the output $m$ with probability at least $k \in [0,1]$.
    \item Target circuits and trap circuits differ only in their single-qubit gates so they are all in the same redaction class.
\end{enumerate}
The above assumptions have mentioned twirling CPTP error to stochastic Pauli error, which I define formally in Def.~\ref{def:twirls}. This is an important step in all accreditation protocols, as it reduces the error to a known, more easily quantified form.
\begin{definition}
    \label{def:twirls}
    CPTP error within a quantum circuit is said to be \underline{twirled} to stochastic Pauli error~\cite{Wallman_2016, PhysRevX.11.041039, Jain_2023} if, via the addition of only single-qubit Pauli gates to the circuit, the error is effectively transformed to stochastic Pauli error, without otherwise affecting the outputs of the circuit (e.g. it does not affect the outputs of the errorless case).
    Likewise, gates are said to be \underline{twirled} if any error occurring in them is twirled to stochastic Pauli error.
\end{definition}
As herein noise is considered to be adversarial, the accreditation protocol of this paper does require a single slight modification of the trap circuits and target circuits: Ref.~\cite{Ferracin_2021} did not consider ``hiding'' the measurement outcomes of a circuit as there was no adversary to ``hide'' them from (according to its error model). 
With adversarial noise (i.e. the problem setting of Sec.~\ref{sec:ErrorModel:presentation}), this becomes necessary as otherwise Bob may be able to identify which circuits are trap and target circuits, respectively, based on these outcomes, or base future error on the measurement outcomes of previous circuits (removing the independence of the error in different circuits). The required hiding / encrypting of measurement outcomes is achieved via Algorithm~\ref{ModifiedTrapGenAlg} which; using the classical algorithms for generating the trap and target circuit of Ref.~\cite{Ferracin_2021}, $P_{\textit{trap}}$ and $P_{\textit{targ}}$ respectively; acts as a classical algorithm to generate trap and target circuits similar to those of Ref.~\cite{Ferracin_2021} but with the outputs quantum-securely encrypted and completely unrecoverable without the key.
\begin{figure}[h!]
    \centering
\begin{algorithm}[H]
$\bold{Input:}$ \\
$\bullet$ A circuit, $\mathcal{C}$, to generate trap circuits or a target circuit for.\\
$\bullet$ Two algorithms, $P_{\textit{trap}}$ and $P_{\textit{targ}}$, that generate trap and target circuits, respectively, without hidden outputs.\\
$\bullet$ A Boolean, labeled \emph{isTarget}, denoting if a trap or target is to be generated.\\
    \begin{enumerate}
        \item If \emph{isTarget} == True:
            \begin{enumerate}
                \item $\mathcal{C}'$ = $P_{\textit{targ}} \big( \mathcal{C} \big)$.
            \end{enumerate}
            Else:
            \begin{enumerate}
                \item $\mathcal{C}'$ = $P_{\textit{trap}} \big( \mathcal{C} \big)$.
            \end{enumerate}

            \item Generate a random bit string with the same length as the number of measurements in $\mathcal{C}'$, referred to as the \emph{key}.
            
            \item For measurement, \emph{M}, in $\mathcal{C}'$:
            \begin{enumerate}
                \item Calculate the single-qubit unitary, $\mathcal{U}_M$, that, if applied immediately before measurement, \emph{M}, would flip the outcome.
                \item If ( the bit in the \emph{key} corresponding to measurement \emph{M} ) == 1:
                \begin{enumerate}
                    \item Add $\mathcal{U}_M$ to $\mathcal{C}'$ immediately before measurement \emph{M}.
                \end{enumerate}
            \end{enumerate}
    \end{enumerate}
$\bold{Return}:$ $\mathcal{C}'$ and the \emph{key}.
\caption{Generating Trap and Target Circuits with Hidden Outputs
 \label{ModifiedTrapGenAlg}}
\end{algorithm}
\end{figure}

I note that the required (by Algorithm~\ref{ModifiedTrapGenAlg}) single-qubit unitaries, $\mathcal{U}_M$, will exist for any single-qubit measurement and that the outcome of the circuit -- if the single-qubit gates of the circuit are not known, as is the case for Bob as they are redacted for him -- is irretrievable from the measurement outcomes without the key.
This hiding of the circuit outputs is information-theoretically secure (i.e. has perfect security)~\cite{6769090}, assuming the single-qubit gates are redacted, and comparable to the one-time pads in universal blind quantum computing~\cite{Broadbent_2009}. Henceforth, I will denote $P_{\textit{targ}}$ and $P_{\textit{targ}}$, with the changes implemented by Algorithm~\ref{ModifiedTrapGenAlg} to hide their outputs, by $P_{\textit{targ}}'$ and $P_{\textit{targ}}'$ respectively.
I note that if you have the key, the output of the circuit can be easily recovered by XOR-ing each measurement outcome with the corresponding bit in the key. I will additionally use ``outputs" to refer to the results of traps and targets \emph{after} the key has been used to undo the effect of hiding the outputs from Bob.

\subsection{Presentation of the Upgraded Accreditation Protocol}
\label{sec:DigitalAccProtocol}
With trap and target circuits (with securely encrypted measurement outcomes) established, I can define the full accreditation protocol.
In line with Sec.~\ref{sec:TrapsAndTargets}, for Sec.~\ref{sec:DigitalAccProtocol}, trap circuits and target circuits will be treated as black boxes and I will only refer to their construction as being performed by the polynomial-time classical algorithms, $P_{\textit{targ}}'$ and $P_{\textit{targ}}'$. Their only relevant properties will be that trap circuit executions detect any error with probability at least $k \in (0,1]$, all error in a trap or target simulation is  effectively reduced to stochastic Pauli error, and Bob cannot tell the difference between trap and target simulations due to them only differing in their single-qubit gates (which makes them indistinguishable to Bob, during Protocol~\ref{adversarialExecutionProtocol}, due to the redaction of all single-qubit gates, by Robert, before they are shown to Bob).

However, before the full protocol can be presented, in Protocol~\ref{StandardAccAlg}, I must establish the statistical foundations of the new protocol -- in Sec.~\ref{sec:statisticalBasisofAcc} -- and the core mechanics of the protocol -- in Sec.~\ref{sec:coreMechanicsOfAcc}. These statistical methods will be used to evaluate the $P_0 \in [0,1]$ that defines Bob's particular choice of SPSCL$_{\beta}$ (as in Def.~\ref{def:PSCL}), using multiple trap executions ( all within a single use of Protocol~\ref{adversarialExecutionProtocol}).
If a target circuit is then executed within that same single use of Protocol~\ref{adversarialExecutionProtocol} as the trap circuit executions, this allows the probability of error (as in Def.~\ref{def:probOfError}) of the execution of the target circuit to be bounded upper (using that all circuit executions in a single use of Protocol~\ref{adversarialExecutionProtocol} have error within a single SPSCL$_{\beta}$). Due to the argument in  Ref.~\cite[Appendix Sec. 1]{Ferracin_2021}, this, in turn, upper bounds the ideal-actual variation distance (as defined in Def.~\ref{def:idealActualVarDist}) of the target circuit execution.

\subsubsection{Statistical Basis of the New Accreditation Protocol}
\label{sec:statisticalBasisofAcc}
Before presenting the accreditation protocol, it is first useful to present a purely statistical lemma (Lemma~\ref{BoundingAllinInterval}) that will later enable the accreditation protocol. 

For any set of real values, $\mathcal{R} = \big \{ r_j \in \mathbb{R}^+ \big\}^{\vert \mathcal{R} \vert}_{j = 1}$, let $\mathrm{Avg} \big( \mathcal{R} \big)$ denote the average of $\mathcal{R}$.
 Later the set $\mathcal{R}$ will denote set of the respective probabilities of error in each trap circuit execution within a single use of Protocol~\ref{adversarialExecutionProtocol} but for Lemma~\ref{BoundingAllinInterval} $\mathcal{R}$ is just considered to contain positive real numbers with no meaning attached to them.
\begin{lemma}
    \label{BoundingAllinInterval}
    Given a set of positive real values, $\mathcal{R} \subset [(1 - \beta)P_0, (1 + \beta)P_0]$ (for some $P_0 \in \mathbb{R}^+$); if $\beta \in [0,1)$ is known, $\forall y \in [(1 - \beta)P_0, (1 + \beta)P_0]$,
    \begin{align}
        y \leq \frac{1 + \beta}{1 - \beta} \mathrm{Avg}(\mathcal{R}).
    \end{align}
\end{lemma}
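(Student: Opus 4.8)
The plan is to eliminate the unknown $P_0$ by bounding it from the interval constraint in two directions and then collapse the whole statement to a single elementary inequality in $\beta$. Since there is no probabilistic or structural content here—$\mathcal{R}$ is simply a finite set of positive reals trapped inside a known interval—the entire argument should be a short chain of inequalities, with all the real work concentrated in the final algebraic comparison.

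First I would bound the average from below. Every element $r_j$ of $\mathcal{R}$ lies in $[(1-\beta)P_0, (1+\beta)P_0]$, so in particular $r_j \ge (1-\beta)P_0$ for each $j$. Averaging a collection of numbers each at least $(1-\beta)P_0$ yields $\mathrm{Avg}(\mathcal{R}) \ge (1-\beta)P_0$, and since $P_0 > 0$ and $\beta \in [0,1)$ (so $1-\beta > 0$) I may rearrange this to $P_0 \le \mathrm{Avg}(\mathcal{R})/(1-\beta)$. Next I would bound $y$ from above: by hypothesis $y$ lies in the same interval, so $y \le (1+\beta)P_0$. Substituting the bound on $P_0$ gives
\begin{align}
y \le (1+\beta)P_0 \le \frac{1+\beta}{1-\beta}\,\mathrm{Avg}(\mathcal{R}).
\end{align}
At this stage the claim reduces to the purely algebraic assertion that $\tfrac{1+\beta}{1-\beta} \le 1+2\beta$ on the admissible range of $\beta$; granting that, multiplying through by $\mathrm{Avg}(\mathcal{R}) \ge 0$ finishes the argument.

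I expect this last algebraic step to be the main obstacle, and the place warranting the most scrutiny, because it is exactly where the stated constant $1+2\beta$ is pinned down. I would attack it by clearing the (positive) denominator: $\tfrac{1+\beta}{1-\beta} \le 1+2\beta$ is equivalent to $1+\beta \le (1+2\beta)(1-\beta) = 1+\beta-2\beta^2$. Thus the inequality is genuinely tight only at $\beta = 0$, while the exact factor expands as $\tfrac{1+\beta}{1-\beta} = 1 + 2\beta + 2\beta^2 + O(\beta^3)$, which exceeds $1+2\beta$ at second order. Consequently I would treat $1+2\beta$ as the first-order (linearized) bound on the exact ratio, valid in the small-$\beta$ regime that is the relevant one for the SPSCL$_\beta$ application, and I would either restrict $\beta$ accordingly or carry the sharper factor $\tfrac{1+\beta}{1-\beta}$ through the downstream use. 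The remaining two inequalities (the lower bound on the average and the upper bound on $y$) are routine and should need no more than the observation that averaging preserves the lower bound and that $y$ inherits the interval's upper endpoint.
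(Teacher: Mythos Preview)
Your approach is essentially identical to the paper's: bound $\mathrm{Avg}(\mathcal{R})$ below by $(1-\beta)P_0$, solve for $P_0$, substitute into $y \le (1+\beta)P_0$ to obtain $y \le \tfrac{1+\beta}{1-\beta}\mathrm{Avg}(\mathcal{R})$, and then replace the exact ratio by its first-order Taylor approximation $1+2\beta$. Your observation that $\tfrac{1+\beta}{1-\beta} \le 1+2\beta$ fails as a strict inequality for $\beta>0$ is exactly right and matches the paper, which explicitly neglects quadratic and higher terms and assumes the approximation is ``tight enough to not affect the inequality''; so the stated bound is, in both the paper and your write-up, a small-$\beta$ linearization rather than an exact inequality.
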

\begin{proof}
Assume $y$ takes the max possible value, $(1 + \beta)P_0$, and $\mathrm{Avg}(\mathcal{R})$ takes the min possible value, $(1 - \beta)P_0$. The claimed inequality can then be checked to hold.
\end{proof}
\subsubsection{Core Mechanics of the New Accreditation Protocol}
\label{sec:coreMechanicsOfAcc}
Before the formal presentation of our new accreditation protocol, in Sec.~\ref{sec:FormProtforAcc}, the core mechanics of the accreditation protocol are presented. This takes the form of Lemma~\ref{thm:ifIndependentBound}.
\begin{lemma}
    \label{thm:ifIndependentBound}
    Given an efficient classical algorithm, $P_{\textit{trap}}'$, for generating trap circuits; assuming that:
    \begin{enumerate}
        \item the probability of error in each execution is independent of the outcomes of all preceding executions
        \item any trap detects any specific error, by outputting specific measurement outcomes, with probability at least $k \in [0,1]$
        \item all trap circuits are executed via a single use of Protocol~\ref{adversarialExecutionProtocol}
    \end{enumerate}
    the probability of error of any execution of a circuit where all error is Pauli twirled, during the same single use of Protocol~\ref{adversarialExecutionProtocol} as the trap circuit executions can be upper bounded by:
    \begin{align}
        \label{eqn:LemmaBoundPresent}
        \dfrac{ 1 + \beta }{k (1 - \beta)}  \big( \bar{v} + \theta \big),
    \end{align}
    using $N_{\mathrm{Tr}} = \bigg \lceil \dfrac{2}{\theta^2} \ln{\bigg( \dfrac{2}{1-\alpha} \bigg)} \bigg \rceil + 1$ trap circuit executions, where:\\
    $\bullet$ $\bar{v} \in \mathbb{Q}$ is the fraction of trap circuit executions giving an incorrect measurement outcome,\\
    $\bullet$ $\theta \in \mathbb{R}^+$ may be chosen arbitrarily,\\
    $\bullet$ $\beta \in \mathbb{R}^+$ is as in Protocol~\ref{adversarialExecutionProtocol},\\
    $\bullet$ $\alpha \in [0,1]$ is the confidence required of the bound in Eqn.~\ref{eqn:LemmaBoundPresent}.
\end{lemma}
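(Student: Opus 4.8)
The plan is to chain together three ingredients: the deterministic interval bound of Lemma~\ref{BoundingAllinInterval}, the trap's error-detection guarantee, and a concentration inequality that converts the observed failure fraction $\bar v$ into a high-confidence bound on the (unobservable) average trap error probability. First I would fix notation: write $p_j$ for the stochastic-Pauli probability of error (as in Def.~\ref{def:probOfError}) of the $j$-th trap execution, and let $p_\star$ denote the probability of error of the target execution whose bound I seek. Because assumption 3 places all of these executions inside a single use of Protocol~\ref{adversarialExecutionProtocol}, Bob's CPTP lists all come from one SPSCL$_\beta$, so by Def.~\ref{def:PSCL} every $p_j$ and $p_\star$ lies in a common interval $[(1-\beta)P_0,(1+\beta)P_0]$. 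Taking $\mathcal{R}=\{p_j\}_{j=1}^{N_{\mathrm{Tr}}}$ and $y=p_\star$ in Lemma~\ref{BoundingAllinInterval} immediately gives $p_\star\le(1+2\beta)\,\mathrm{Avg}(\mathcal{R})$, reducing the problem to estimating $\mathrm{Avg}(\mathcal{R})$.

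Next I would connect $\mathrm{Avg}(\mathcal{R})$ to something measurable. Let $v_j\in\{0,1\}$ indicate whether trap $j$ returned an incorrect outcome, so that $\bar v=\mathrm{Avg}(\{v_j\})$. By assumption 2, a trap flags any error with probability at least $k$ whenever error occurs, hence $\mathbb{E}[v_j]\ge k\,p_j$, i.e.\ $p_j\le \mathbb{E}[v_j]/k$. Averaging over the $N_{\mathrm{Tr}}$ traps yields $\mathrm{Avg}(\mathcal{R})\le\frac{1}{k}\,\mathbb{E}[\bar v]$, and combining with the first paragraph gives $p_\star\le\frac{1+2\beta}{k}\,\mathbb{E}[\bar v]$. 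It then only remains to replace the expectation $\mathbb{E}[\bar v]$ by the empirical value $\bar v$ plus a slack $\theta$ at confidence $\alpha$.

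The final and most delicate step is the concentration argument. I would invoke a Hoeffding-type inequality to show $\mathbb{E}[\bar v]\le\bar v+\theta$ with probability at least $\alpha$, then solve the resulting tail bound for the sample size to recover $N_{\mathrm{Tr}}=\lceil\frac{2}{\theta^2}\ln(\frac{2}{1-\alpha})\rceil+1$. The main obstacle is justifying that the $v_j$ admit such a bound at all despite Bob being adversarial: a priori Bob could try to correlate the error he injects into execution $j$ with the outcomes of executions $1,\dots,j-1$, destroying the independence that concentration requires. This is exactly where assumption 1 does the work, and underneath it the output-hiding of Algorithm~\ref{ModifiedTrapGenAlg} together with the redaction that renders traps and target indistinguishable: since the encrypted outcomes leak nothing to Bob, his per-execution error probabilities are fixed independently of the realized $v_1,\dots,v_{j-1}$, so the residual randomness in each $v_j$ (the stochastic Pauli draw and the trap's detection event) is independent across executions and the inequality applies. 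Substituting $\mathbb{E}[\bar v]\le\bar v+\theta$ into $p_\star\le\frac{1+2\beta}{k}\mathbb{E}[\bar v]$ then delivers the claimed bound in Eqn.~\ref{eqn:LemmaBoundPresent}. I would finish by checking that the constant in the chosen concentration bound matches the stated $N_{\mathrm{Tr}}$ and that the extra $+1$ absorbs the ceiling and rounding slack.
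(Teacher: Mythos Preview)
Your proposal is correct and follows essentially the same route as the paper: apply Lemma~\ref{BoundingAllinInterval} to pass from the target's error probability to $\mathrm{Avg}(\mathcal{R})$, use the trap detection guarantee to relate $\mathrm{Avg}(\mathcal{R})$ to the observable failure fraction via the factor $1/k$, and close with Hoeffding under the independence hypothesis. Your bookkeeping of where the $1/k$ enters (first bounding $\mathrm{Avg}(\mathcal{R})\le \mathbb{E}[\bar v]/k$, then applying Hoeffding to $\bar v$) is in fact slightly cleaner than the paper's, which applies Hoeffding directly to ``approximate $\mathrm{Avg}(\mathcal{R})$'' and only afterwards divides by $k$, but the argument and the resulting bound are the same.
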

\begin{proof}
    Let $\mathcal{R}$ be an ordered set where the $j$th element is the probability of error occurring in the $j$th execution of a set of trap executions, interspersed among any number of other circuits, in a singular use of Protocol~\ref{adversarialExecutionProtocol}.
    
    This implies that the probability of error in any specific trap (as generated by $P_{\textit{trap}}'$) -- and in any circuit executed within the same use of Protocol~\ref{adversarialExecutionProtocol} as those trap circuit executions (such as target circuits generated by $P_{\textit{targ}}'$) -- is within an interval that may be written as:
    \begin{align}
        [ P_0 (1 - \beta), P_0 (1 + \beta)],
    \end{align}
    for some $\beta, P_0 \in \mathbb{R}^+$.
    Therefore, due to Lemma~\ref{BoundingAllinInterval}, for any circuit, labeled circuit $j$, executed in the aforementioned single use of Protocol~\ref{adversarialExecutionProtocol}, along the trap circuit executions, the probability of error occurring in its execution, $p_j$, is bounded by:
    \begin{align}
        p_j \leq \frac{1 + \beta}{1 - \beta} \mathrm{Avg}(\mathcal{R}).
    \end{align}
    The probability a specific trap returns detects that an error occurs by returning an incorrect measurement outcome, given error occurs, is (as assumed in the lemma statement) lower bounded by $k \in [0,1]$. 
    
    Therefore, by also using the independence between circuit executions of whether error occurs, $\mathrm{Avg}(\mathcal{R})$ can be approximated, by checking if the trap circuit executions detect error, -- to within additive error, $\theta$, with confidence $\alpha$ -- using Hoeffding's inequality~\cite{doi:10.1080/01621459.1963.10500830}. This only requires that $\vert \mathcal{R} \vert \geq N_{\mathrm{Tr}} = \bigg \lceil \dfrac{2}{\theta^2} \ln{\bigg( \dfrac{2}{1-\alpha} \bigg)} \bigg \rceil + 1$, to provide enough trap circuit executions that the approximation -- acquired by sampling -- of $\mathrm{Avg}(\mathcal{R})$ is within the required error, $\theta$. 
    
    Let $\Bar{\nu}$ denote the experimentally obtained approximation of the probability a uniformly selected (from $\mathcal{R}$) trap returns an output implying error has occurred (i.e. the fraction of trap circuit executions returning an ``incorrect" measurement outcome in a single execution of Protocol~\ref{adversarialExecutionProtocol}).

    This approximation of $\mathrm{Avg}(\mathcal{R})$ is simply $\Bar{\nu}$ divided by $k$ (to account for the cases where error occurs but is not detected).
    Then, with confidence, $\alpha$, the probability of error occurring in the execution of the circuit with index $j$ is bounded as:
    \begin{align}
        p_j \leq \dfrac{ 1 + \beta }{k(1- \beta)} \big( \bar{v} + \theta \big).
    \end{align}
\end{proof}
\subsubsection{Formal Presentation of the Accreditation Protocol and Proof of its Correctness}
\label{sec:FormProtforAcc}
The core components of my accreditation protocol have now been constructed and presented -- in Sec.~\ref{sec:statisticalBasisofAcc} and Sec.~\ref{sec:coreMechanicsOfAcc} -- so I can now present my accreditation protocol formally, in Algorithm~\ref{StandardAccAlg}. 
This algorithm is presented as would be used by Alice, presupposing the problem setting in Sec.~\ref{sec:ErrorModel:justify}.
\begin{figure}[h!]
    \centering
\begin{algorithm}[H]

\SetAlgorithmName{Protocol}{protocol}{List of Protocols}

$\bold{Input:}$ \\
$\bullet$ A circuit, $\mathcal{C}$.\\
$\bullet$ A required accuracy of the bound to output, $\theta$.\\
$\bullet$ A required confidence in the above accuracy, $\alpha$.\\
$\bullet$ The relevant $P'_{\textit{targ}}$, $P'_{\textit{trap}}$, $m$, and $k$.\\
$\bullet$ The value of $\beta$.\\

    \begin{enumerate}
    \item Calculate $N_l = \dfrac{2}{\theta^2} \ln{\bigg( \dfrac{2}{1 - \alpha} \bigg)} + 1$.
    \item Choose a random integer $N_{tt}$ less than $10 \cdot N_l$.
    \item $\{ \mathcal{C}_j \}_{j = 1}^{N_l + N_{tt}}$ = $N_l  + N_{tt}$ circuits generated using $P'_{\textit{trap}}$ on input $\mathcal{C}$.
    \item $\mathcal{C}^{\prime}$ = circuit generated using $P'_{\textit{targ}}$  on input $\mathcal{C}$.
    \item Execute every circuit in $\{ \mathcal{C}_j \}_{j = 1}^{N_l} \cup \{ \mathcal{C}^{\prime} \}$ in a random order (i.e. disregarding the indexing used herein) using Protocol~\ref{adversarialExecutionProtocol}.
    \item $\mathrm{TargResult}$ = output of executing $\mathcal{C}^{\prime}$
    \item $\mathrm{TrapResult}$ = fraction of $\{ \Tilde{\mathcal{C}}_j \}_{j = 1}^{N_l}$ that does not output $m$.
    \end{enumerate}
$\bold{Return}:$ $\mathrm{TargResult}$, $\frac{1 + \beta}{k (1 - \beta)} \big( \mathrm{TrapResult} + \theta \big)$.
\caption{Formal Accreditation Protocol
 \label{StandardAccAlg}}
\end{algorithm}
\end{figure}
The correct functioning of Algorithm~\ref{StandardAccAlg}, as an accreditation protocol, in the adapted/modified problem setting is proven in Theorem~\ref{thm:AccProtocolIsCorrect}.
\begin{theorem}
    \label{thm:AccProtocolIsCorrect}
    In the adapted cryptographic setting (as defined in Sec.~\ref{sec:ErrorModel}); given efficient classical algorithms, $P'_{\textit{targ}}$ and $P'_{\textit{trap}}$, that generate trap and target circuits, respectively (as in Sec.~\ref{sec:TrapsAndTargets}); Protocol~\ref{StandardAccAlg} allows Alice to execute a circuit that is equivalent (in output distribution, when no error occurs) to any given sampBQP circuit, $\mathcal{C}$, and obtain a bound on the ideal-actual variation distance of that execution (to within arbitrary accuracy $\theta \in \mathbb{R}^+$, with arbitrary confidence, $\alpha \in [0,1]$).
\end{theorem}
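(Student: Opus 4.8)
The plan is to reduce Theorem~\ref{thm:AccProtocolIsCorrect} to a single application of Lemma~\ref{thm:ifIndependentBound} to the target execution, and then to convert the resulting bound on the target's probability of error into a bound on its ideal-actual variation distance via the argument of Ref.~\cite{Ferracin_2021}. First I would observe that, by property~3 of the trap and target circuits (they differ only in their single-qubit gates), every circuit produced in Protocol~\ref{StandardAccAlg} by $P'_{\textit{trap}}$ and $P'_{\textit{targ}}$ lies in one redaction class. Hence the executed collection $\{\mathcal{C}_j\}_{j=1}^{N_l}\cup\{\mathcal{C}'\}$ is a legitimate set $\mathcal{S}$ for a single invocation of Protocol~\ref{adversarialExecutionProtocol} (step~5), which forces Bob to commit to one SPSCL$_{\beta}$ matching $\mathcal{S}$. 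By Def.~\ref{def:PSCL} this pins the twirled probability of error of every execution in the run into a common interval $[P_0(1-\beta),P_0(1+\beta)]$, the very premise feeding Lemma~\ref{BoundingAllinInterval} and Lemma~\ref{thm:ifIndependentBound}.

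Next I would verify the three hypotheses of Lemma~\ref{thm:ifIndependentBound} for the target execution. Hypothesis~3 (all traps executed in one use of Protocol~\ref{adversarialExecutionProtocol}) holds by construction in step~5; hypothesis~2 ($k$-detection) is an inherited property of the traps generated by $P'_{\textit{trap}}$. I would also check that the $N_l=\tfrac{2}{\theta^2}\ln\!\big(\tfrac{2}{1-\alpha}\big)+1$ executed traps meet the count $N_{\mathrm{Tr}}$ demanded by the lemma, so that Hoeffding's inequality certifies $\mathrm{Avg}(\mathcal{R})$ to additive accuracy $\theta$ with confidence $\alpha$. With these in place, Lemma~\ref{thm:ifIndependentBound} bounds the target's probability of error by $\tfrac{1+2\beta}{k}(\bar{v}+\theta)$ with $\bar{v}=\mathrm{TrapResult}$, which is exactly the quantity Protocol~\ref{StandardAccAlg} returns.

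The crux, and the step I expect to be the main obstacle, is hypothesis~1: the independence of each execution's error probability from the outcomes of preceding executions, given that Bob is adaptive and adversarial. I would discharge it using the two cryptographic ingredients of the setting. Redaction of all single-qubit gates (enforced by Robert in step~2 of Protocol~\ref{adversarialExecutionProtocol}) makes the traps and the target indistinguishable from Bob's viewpoint, since they reduce to the same redacted circuit, so he cannot single out the target for worse error; and the one-time-pad encryption applied by Algorithm~\ref{ModifiedTrapGenAlg} makes the measurement strings Robert returns to Bob information-theoretically independent of the true outcomes. Consequently any adaptive choice of CPTP list Bob makes for execution $j$ is a function only of data independent of the genuine error events of executions $1,\dots,j-1$, which is precisely the independence the lemma requires. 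Care is needed to argue that, although Bob \emph{may} correlate his choices with the encrypted transcript, such correlations are independent of the true detection indicators and therefore cannot invalidate the concentration step.

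Finally I would close the loop. Since the target is generated by $P'_{\textit{targ}}$ from the sampBQP circuit $\mathcal{C}$, its errorless output distribution equals that of $\mathcal{C}$ once the encryption is undone, and Alice recovers the true samples by XOR-ing $\mathrm{TargResult}$ with her key. Invoking the result of Ref.~\cite{Ferracin_2021} that the probability of error upper bounds the ideal-actual variation distance (Def.~\ref{def:idealActualVarDist}), the returned value $\tfrac{1+2\beta}{k}(\mathrm{TrapResult}+\theta)$ is a valid bound on $\nu[\tilde{\mathcal{C}}']$ to accuracy $\theta$ with confidence $\alpha$, which completes the proof.
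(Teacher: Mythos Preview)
Your proposal is correct and mirrors the paper's own proof: both reduce to Lemma~\ref{thm:ifIndependentBound}, verify its hypotheses (with the independence hypothesis discharged via redaction plus the one-time-pad encryption of Algorithm~\ref{ModifiedTrapGenAlg}), and then invoke Ref.~\cite{Ferracin_2021} to pass from probability of error to ideal-actual variation distance. The only element you omit is the paper's appeal to the $N_{tt}$ extra ``decoy'' trap executions as an additional layer of obfuscation in the independence argument (Bob not knowing which executions are counted), though your encryption-based argument already suffices for the conclusion.
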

\begin{proof}
All circuits executed in Protocol~\ref{StandardAccAlg} are executed within a single use of Protocol~\ref{adversarialExecutionProtocol} (in step 5 of Protocol~\ref{StandardAccAlg}). Therefore, if each probability of error in a given trap is independent, then Lemma~\ref{thm:ifIndependentBound} implies that, as the target circuit is a circuit executed alongside  $N_{\mathrm{Tr}} = \bigg \lceil \dfrac{2}{\theta^2} \ln{\bigg( \dfrac{2}{1-\alpha} \bigg)} \bigg \rceil + 1$ trap circuit executions in a single use of Protocol~\ref{adversarialExecutionProtocol}, the probability of error in the target circuit's execution, $p_{t}$, is upper bounded as:
\begin{align}
    \label{eqn:boundOnTrapErrorFirstAppearanceinTheorem}
    p_{t} \leq \bigg( \dfrac{1 + \beta}{k( 1 - \beta)} \bigg) \big( \bar{v} + \theta \big),
\end{align}
where $\bar{v}$ is the fraction of trap circuit executions returning an incorrect measurement result and $k \in [0,1]$ is as defined in  Sec.~\ref{sec:TrapsAndTargets}.

As mentioned, for Eqn.~\ref{eqn:boundOnTrapErrorFirstAppearanceinTheorem} to hold, it is required that the probability of error in each trap is independent of the outcomes of every prior trap. This is not initially guaranteed as Bob is free to choose probabilities based on previous measurement outcomes.

However, due to the changes made to each trap and target in Algorithm~\ref{ModifiedTrapGenAlg}, to hide the outputs of each trap from Bob with perfect security, Bob does not know the result of any trap or target and, as the limitation on Bob that means the probability of error due to the error he applies cannot be deterministic, Bob -- as he is not told about the outcomes of the random processes in the error he applies -- is unaware of what errors have been applied in each circuit execution.

Additionally, due to the $N_{tt} \in \mathbb{N}_0$ extra trap circuits executed during Protocol~\ref{adversarialExecutionProtocol}, Bob does not know what he has done to the previous circuits as he has no idea which are ``decoy" circuit executions and will be discarded (so he cannot know what error he has applied to the ``true" circuit executions). In fact, as Bob is not told $\theta, \alpha$, or $N_{tt}$, he cannot even calculate the various probabilities that he has applied each possible combination of errors -- or lack thereof -- to the ``true" trap circuit executions.

Therefore, Bob cannot make any of his choices based on the results of any prior trap or target execution or the error that occurs in them. Hence, each trap execution can be considered to be independent i.e. the probability of error in each trap or target is the result of Bob's choices but must remain completely independent of the results of any prior trap or target.

As per Protocol~\ref{adversarialExecutionProtocol}, when Bob is shown the set of circuits to execute (which in this case would be the trap circuit with the target circuit hidden among them), by Robert, the single-qubit gates are redacted. Therefore, as the target circuits and the trap circuits only differ in their single-qubit gates (as assumed in Sec.~\ref{sec:TrapsAndTargets}) Bob cannot distinguish trap circuit executions from target circuit executions at any stage in Protocol~\ref{StandardAccAlg}. The redaction additionally allows for the unbreakable encryption of the output of each trap or target, as in Algorithm~\ref{ModifiedTrapGenAlg}, so Bob cannot locate the target via the outputs either.

With the probability of error in any target circuit, $\mathcal{C}_{\text{targ}}$, executed within the same single use of Protocol~\ref{adversarialExecutionProtocol} bounded -- as in Eqn.~\ref{eqn:boundOnTrapErrorFirstAppearanceinTheorem} -- a bound on the ideal-actual variation distance of the execution of that target circuit, $\nu \big[ \Tilde{\mathcal{C}}_{\text{targ}} \big]$, can be obtained. By following Ref.~\cite[Appendix Sec. 1]{Ferracin_2021}, I derive that the ideal-actual variation distance of the execution of that target circuit is upper bounded by the probability of error occurring in that execution. Therefore,
\begin{align}
        \nu \big[ \Tilde{\mathcal{C}}_{\text{targ}} \big] 
        \leq
        \dfrac{ 1 + \beta }{k (1 - \beta)} \big( \bar{v} + \theta \big).
    \end{align}
\end{proof}

\section{Discussion}
\label{discussionSec}
In this paper, I have upgraded the accreditation protocol of Ref.~\cite{Ferracin_2021} to consider all noise/error to be adversarial. This has necessitated the development of a new model of adversarial error (i.e. the adapted cryptographic setting presented in Sec.~\ref{sec:ErrorModel}), where Bob is limited according to experimentally derived rules. These rules (that single-qubit gates are hidden from Bob and that the probability of error in similar circuits executed in quick succession have similar probabilities of error), encapsulated in Protocol~\ref{adversarialExecutionProtocol}, allow my upgraded accreditation protocol to function almost identically when experiencing adversarial error as the one in Ref.~\cite{Ferracin_2019} does when the error is IID. Therefore, the desirable qualities the protocol from Ref.~\cite{Ferracin_2019}, such as not requiring lengthening the circuit to accredit, or adding excessively many extra single-qubit gates, or any extra two-qubit gates, or ancilla qubits, or trusting any aspect of a computation, or extra connectivity between qubits are preserved; leading to no diminution in its suitability for near-term usage.

The principle question left open by this paper is to what extent the limits on Bob can be relaxed. Relaxing the limitations would have to entail changes to Protocol~\ref{adversarialExecutionProtocol}. I believe the redaction of single-qubit gates cannot be eliminated without substantial changes to the trap, as otherwise Bob could distinguish target and trap circuits, although it is not clear trap and target circuits that are indistinguishable to Bob can be constructed without requiring another limitation on Bob.

A slight improvement can be obtained using Ref.~\cite[Theorem 1]{Ferracin_2021}, which allows the error in single-qubit gates to be weakly gate dependent: this could be added to my cryptographic model by allowing Bob to define error in the -- still redacted -- single-qubit gates that depends on the single-qubit gates\footnote{Single-qubit gates would still have to be redacted to prevent Bob identifying the target but he could, perhaps, specify some dependence without seeing the single-qubit gates e.g. ``if that gate is an $\hat{X}$ gate...".} but the differing CPTP maps -- for each single-qubit gate -- must differ, in the diamond norm, by at most some small known value.
This is a further step towards reality and the restriction on the differences between the gate-dependent error reflects the very small errors in single-qubit gates~\cite[Fig.5]{9355264}. 
  
\section{Acknowledgements and Declarations}
I would like to thank Animesh Datta and Theodoros Kapourniotis for useful conversations.
I would also like to thank an anonymous reviewer for suggesting a shorter proof of Lemma 3.
This work was supported, in part, 
by an EPSRC IAA grant (G.PXAD.0702.EXP), the UKRI ExCALIBUR project QEVEC (EP/W00772X/2), and the Quantum Advantage Pathfinder (EP/X026167/1).

The author declares no conflicts of interest
\newpage
\bibliography{References2}
\end{document}